\newcommand{\vectornorm}[1]{{\left|\left|#1\right|\right|}_{\ell_2}} 
\newcommand{\vectornormInf}[1]{{\left|\left|#1\right|\right|}_{\ell_\infty}} 
\newcommand{\vectornormZero}[1]{{\left|\left|#1\right|\right|}_{\ell_0}} 
\newtheorem{theorem}{Theorem}[section]
\newtheorem{corollary}[theorem]{Corollary}
\newenvironment{proof}[1][Proof]{\begin{trivlist}
\item[\hskip \labelsep {\bfseries #1}]}{\end{trivlist}}
\title{Non-Adaptive Distributed Compression in Networks}
\author{Mahdy Nabaee and Fabrice Labeau
\thanks{M. Nabaee and F. Labeau are with the Department
of Electrical and Computer Engineering, McGill University, Montreal.}
\thanks{This work was supported by Hydro-Québec, the Natural Sciences and Engineering Research Council of Canada and McGill University in the framework of the NSERC/Hydro-Québec/McGill Industrial Research Chair in Interactive Information Infrastructure for the Power Grid.}
}
\begin{document}

\maketitle

\begin{abstract}
In this paper, we discuss non-adaptive distributed compression of inter-node correlated real-valued messages.
To do so, we discuss the performance of conventional packet forwarding via routing, in terms of the total network load versus the resulting quality of service (distortion level).
As a better alternative for packet forwarding, we briefly describe our previously proposed one-step Quantized Network Coding (QNC), and make motivating arguments on its advantage when the appropriate marginal rates for distributed source coding are not available at the encoder source nodes.
We also derive analytic guarantees on the resulting distortion of our one-step QNC scenario.
Finally, we conclude the paper by providing a mathematical comparison between the total network loads of one-step QNC and conventional packet forwarding, showing a significant reduction in the case of one-step QNC.
\end{abstract}


\section{Introduction}
\label{sec:Intro}

In this paper, we study the incast of correlated real-valued messages (\textit{e.g.} sensed data) in multi-hop networks.
Usually, the messages in such networks need to be transmitted to a so called gateway node, which relays them to the next level of the hierarchy.
Delivery of these messages to the gateway node is based on the hopping nature of the network, in which the messages are forwarded, with the aid of a transport protocol.
Specifically, in this paper, we review the information theoretic results on this so called data gathering (or incast) scenario.
We will also motivate a discussion on non-adaptive encoding of messages and describe our previously proposed transmission scheme, which tries to perform as efficient as when the appropriate marginal rates for separate encoding were known.
Our notion of non-adaptive encoding is with respect to the inter-node correlation of messages, and studies the cases, where the knowledge of appropriate marginal rates are not available at the encoders' side.

Initiated in \cite{netInfFlow}, network information flow analysis has helped develop more efficient transmission schemes, when dealing with multiple communication agents. 
Explicitly, network coding has been suggested as an alternative for routing-based packet forwarding, for which many theoretical and practical advantages have been listed.
In the case of lossless networks, with limited link capacities, cut-set upper bound has been shown to be the capacity region \cite{netInfFlow}.
As an extension to lossy networks, in \cite{erasNetCap}, the capacity region is calculated for erasure networks, with side information at the decoder.
Moreover, random linear network coding \cite{koetter2003algebraic}, which can be implemented non-adaptively and computationally simple, is shown to be sufficient for multicast of messages in memoryless and lossless networks \cite{NC_RLNCtoMulticast}.

For the case of correlated sources (messages), performing distributed lossy or lossless source coding \cite{xiong2004distributed} has shown to be sufficient, even if adopted along routing-based packet forwarding \cite{SWCtheorem,ho2004network,NCCorr_NIFwithCOrrSo}.
However, advantages of network coding have drawn attention, especially in the case of lossy networks \cite{fragouli2009network}.
As a new advantage of network coding, in cases where the knowledge of appropriate information rates for performing distributed source coding is not available at the encoders' side, random linear network coding seems to be helpful.
Specifically, this has been studied in \cite{naba1,naba3}, where we proposed Quantized Network Coding (QNC) and discussed practical feasibility of distributed source coding by semi-random linear network coding.

In this paper, we study non-adaptive distributed compression of correlated messages (sources), by using network coding. 
In section~\ref{sec:ProbDesc}, we describe the network and define the incast scenario for sparse messages.
Then, in section~\ref{sec:Qforward}, we briefly describe quantization and packet forwarding method of transmission and analyze its performance, in terms of the total network load versus the achieved distortion level (quality of service).
In section~\ref{sec:QNC}, we describe a simplified version of QNC, called one-step QNC \cite{naba4}, where QNC and packet forwarding are combined.
This simplification helps us to provide analytic results on the sufficient number of quantized network coded packets to ensure a given allowable distortion, as presented in section~\ref{sec:QNCanalysis}.
Finally, in section~\ref{sec:conclusion}, we present our conclusion remarks and discussions.

\section{Network Model and Notation}
\label{sec:ProbDesc}
To model the network, we use a directed graph, $\mathcal{G}=(\mathcal{V},\mathcal{E})$, where $\mathcal{V}=\{1,\cdots,n\}$, and $\mathcal{E}=\{1,\cdots,|\mathcal{E}|\}$, are the sets of nodes and edges, respectively.
The nodes denote the sensing and communication devices, and the edges represent communication links between pairs of nodes.
Specifically, each edge, $e$, can maintain a lossless communication from its tail node, $tail(e)$, to its head node, $head(e)$, at a maximum rate of $C_0$ bits per channel use.
Because of the lossless nature of the links, the input and output contents of edges are the same and denoted by $Y_e(t)$, at time index $t$. 
This time index represents transmission of a packet of length $L$ over edges.
We also assume that the initial rest condition holds in the network, implying: 
\begin{equation}
Y_e(1)=0,~\forall e \in \mathcal{E}.
\end{equation}
Further, the sets of incoming and outgoing edges of node $v$, are also defined as follows:
\begin{eqnarray}
{In}(v)&=&\{e \in \mathcal{E}:head(e)=v\}, \\
{Out}(v)&=&\{e \in \mathcal{E}:tail(e)=v\}.
\end{eqnarray}

The edges are uniformly distributed between pairs of nodes, forming a Erdos-Renyi random graph \cite{erdds1959random}. 
Explicitly, for each edge $e \in \mathcal{E}$, and for all $v,v' \in \mathcal{V}$, where $v \neq v'$, we have:
\begin{equation}\label{Eq:defPconn}
\textbf{P}\Big( tail(e)=v,head(e)=v' \Big) =\frac{1}{n(n-1)}.
\end{equation}

Associated to each node $v$, there is a random variable $X_v \in \mathbb{R}$, which is considered as the corresponding message of that node.
These random messages, $\underline{X}=[X_v:v \in \mathcal{V}]$, are correlated and their correlation is modeled by using sparsity.
Specifically, we assume that there is an orthonormal transform matrix, $\phi_{n \times n}$, for which $\underline{S}=\phi^T \underline{X}$ is $k$-sparse: $\vectornormZero{\underline{S}} = k$.
Moreover, we assume that the messages, take their values randomly and \textit{uniformly} between $-q_{\rm{max}}$ and $+q_{\rm{max}}$:
\begin{equation}
|X_v| \leq q_{\rm{max}},~\forall v \in \mathcal{V}.
\end{equation}
The messages, $X_v$, are ready for transmission at $t=1$.
In this paper, the lower case letters correspond to the realizations of random variables, denoted by upper case letters.

In our incast scenario, referred to as \textit{data gathering}, we need to transmit messages, $X_v$'s, to a single gateway (decoder) node, denoted by $v_0 \in \mathcal{V}$ and recover them with a maximum distortion of $D_0$, such that:
\begin{equation}\label{Eq:distortionLimit}
\textbf{E}[|X_v-\hat{X}_v|] \leq D_0,~\forall v \in \mathcal{V}.
\end{equation}
In (\ref{Eq:distortionLimit}), $\hat{X}_v$ is the recovered version of $X_v$, at the decoder node.
The product of the required number of packets to ensure the above distortion constraint and the packet length is called the \textit{total network load}, in this paper.
This total network load can be used to reflect the required number of transmissions and can be used as a measure of efficiency for different transmission methods.


Performing noiseless \cite{slepian1973noiseless} or noisy \cite{wZcoding} distributed source coding is the usual solution to take care of inter-node redundancy of messages. 
However, in the described cases, where the knowledge of appropriate marginal rates for performing separate encoding is not available at the encoder side, only a non-adaptive scheme, which does not rely on the prior of messages, can be adopted.
In this paper, we study such scenario and discuss on the feasibility of employing efficient non-adaptive transmission schemes for sparse messages.

\section{Quantization and Packet Forwarding}
\label{sec:Qforward}

In this section, we consider Packet Forwarding (PF) as a non-adaptive transmission method for the messages. 
As it was discussed earlier, distributed source coding can not be done, as the appropriate marginal rates are not known at the sensor (encoder) nodes.
The messages are going to be forwarded to the decoder node, according to the calculated routes.

Limited capacity of the links requires us to quantize the messages before transmission over the links.
This has motivated the use of phrase Quantization and Packet Forwarding (QPF), for this method.
Specifically, we quantize the messages and send the quantized version, $\textbf{Q}(X_v)$, to the decoder node.
Since the messages are uniformly distributed, a uniform quantizer minimizes the associated quantization noise power, for which:
\begin{equation}
|X_v-\textbf{Q}(X_v)| \leq \Delta_Q = \frac{2q_{\rm{max}}}{\lfloor 2^{L C_0} \rfloor}.
\end{equation}
Moreover, for the uniform quantizer, we have:
\begin{equation}\label{Eq:unQerrVar}
\textbf{E}[|X_v-\textbf{Q}(X_v)|]=\frac{\Delta_Q}{2}.
\end{equation}
This is summarized in the following corollary.

\begin{corollary}\label{th:qPFrd}
For the described QPF scenario with real-valued uniform messages, the distortion level of $D_0$,
\begin{equation}\label{Eq:distMeasure1}
\textbf{E}[|X_v-\hat{X}_v|] \leq D_0,~\forall v \in \mathcal{V},
\end{equation}
is achieved if and only if the adopted packet length, $L$, to transmit $(n-1)$ quantized messages is such that:
\begin{equation}
L \simeq \frac{1}{C_0} \log_{2}(\frac{q_{\rm{max}}}{D_0}), 
\end{equation}
resulting a total network load of
\begin{equation}\label{Eq:loadPF}
L \cdot (n-1) = \frac{n-1}{C_0} \log_{2}(\frac{q_{\rm{max}}}{D_0}).~~\blacksquare
\end{equation}
\end{corollary}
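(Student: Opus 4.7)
The plan is to reduce the end-to-end distortion in the QPF scenario to the quantization distortion alone, then invert the closed-form expression for the quantization step size $\Delta_Q$ to read off the required packet length $L$, and finally multiply by the number of source messages to obtain the total network load.

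First I would note that every edge is lossless and has capacity $C_0$, so packet forwarding delivers $\mathbf{Q}(X_v)$ intact to $v_0$; hence the reconstruction is $\hat{X}_v = \mathbf{Q}(X_v)$, and by (\ref{Eq:unQerrVar}) the expected absolute distortion equals $\Delta_Q/2$ uniformly in $v$. Substituting $\Delta_Q = 2q_{\max}/\lfloor 2^{L C_0}\rfloor$, the constraint (\ref{Eq:distMeasure1}) turns into
\begin{equation*}
\frac{q_{\max}}{\lfloor 2^{L C_0}\rfloor} \leq D_0.
\end{equation*}
Solving for $L$ and absorbing the floor into the $\simeq$ symbol then gives $L \simeq \frac{1}{C_0}\log_2(q_{\max}/D_0)$, which is the claimed packet length.

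For the converse half of the "if and only if", I would invoke the sentence already recorded in the excerpt stating that the uniform quantizer minimizes the noise variance on a uniform source; consequently no choice of quantizer with fewer than $\lceil q_{\max}/D_0\rceil$ representation levels (equivalently, no smaller $L$) can satisfy (\ref{Eq:distMeasure1}). To finish, I would observe that the data-gathering scenario requires delivering one quantized packet for each of the $n-1$ non-gateway messages to $v_0$; by the definition of total network load given in Section~\ref{sec:ProbDesc}, this is $L\cdot(n-1)$, which gives (\ref{Eq:loadPF}).

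I do not expect any genuine obstacles: the argument is essentially bookkeeping. The only subtlety is the rounding hidden behind $\lfloor 2^{L C_0}\rfloor$, which is why the statement uses $\simeq$ rather than equality; this would be disposed of in a line by noting that it introduces at most a single bit of slack.
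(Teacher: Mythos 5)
Your proposal is correct and follows essentially the same route as the paper, which gives no explicit proof beyond the two displayed facts preceding the corollary: the step size $\Delta_Q = 2q_{\rm{max}}/\lfloor 2^{LC_0}\rfloor$ and the expected distortion $\Delta_Q/2$ from (\ref{Eq:unQerrVar}), inverted to solve for $L$ and multiplied by the $(n-1)$ forwarded messages. Your added remarks on lossless delivery, the optimality of the uniform quantizer for the converse, and the rounding absorbed by $\simeq$ merely make explicit what the paper leaves implicit.
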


It can be understood from this corollary that the total network load is in the order of the number of nodes, $n$.
In section~\ref{sec:QNCanalysis}, we will see that by using our previously proposed one-step QNC, one requires a smaller load to ensure the same distortion level.

\section{One-Step Quantized Network Coding}
\label{sec:QNC}

Our formulation of QNC was originally presented in \cite{naba1,naba2}, where the sparse recovery was the only criterion used to recover the messages at the decoder node. 
In \cite{naba3}, we investigated the possibility of implementing optimal mean square-error decoder, when the prior information of messages is also known, in addition to their sparsity.
One-step QNC (originally mentioned in \cite{naba4}) is a special case of QNC, in which we perform linear network coding only at the first time instance and then simply forward the quantized network coded packets to the decoder node.
In the following, we explicitly describe our one-step QNC scenario and its explicit formulation.

To initiate one-step QNC, each node transmits the quantized messages, $\textbf{Q}(X_v)$'s, to its neighboring nodes. As a result of initial rest condition, at $t=2$, we have:
\begin{eqnarray}
Y_e(2) &=&\textbf{Q}(X_{tail(e)}) \\
&=& X_{tail(e)}+N_e(2), \label{Eq:nnn1}
\end{eqnarray}
where $N_e(2)$ is the quantization noise.
These quantized messages are used in each node $v$ to calculate a random linear combination, $P_v$:
\begin{equation}\label{Eq:defPv}
P_v=\sum_{e' \in In(v)} \beta_{v,e'} Y_{e'}(2)+ \alpha_{v} X_v,~\forall v \in \mathcal{V},
\end{equation}
where the local network coding coefficients, $\beta_{v,e'}$'s and $\alpha_{v}$'s, are uniformly and randomly selected from $\{-\kappa,+\kappa\}$, $\kappa>0$.
The appropriate value of the constant $\kappa$ was shown in \cite{naba4} to be equal to:
\begin{equation}
\kappa=\sqrt{\frac{2n^2}{n+|\mathcal{E}|}}.
\end{equation}
Moreover, to avoid over flow, we ensure that the normalization condition of Eq.~3 in \cite{naba1} holds, implying:
\begin{equation}\label{Eq:normalization}
\sum_{e' \in In(v)} |\beta_{v,e'}|+|\alpha_{v}| \leq 1,~\forall v \in \mathcal{V}.
\end{equation}

Next, each node randomly and independently decides to forward each $P_v$ to the decoder node or not, according to a fixed binary probability model. 
Representing the $i$'th received packet at the decoder node by $\{\underline{Z}_{\rm{tot}}(t)\}_i$, we have:
\begin{equation}
\{\underline{Z}_{\rm{tot}}(t)\}_{i}=P_v+N_e(3),~v \longrightarrow i,
\end{equation}
where $v \longrightarrow i$ means that $P_v$ is forwarded to decoder and corresponds to the $i$'th received packet.
The outgoing edge $e$, $e \in Out(v)$, is also the one on which $\textbf{Q}(P_v)$ is sent out and therefore $N_e(3)$ is the corresponding quantization noise.
The total number of received packets at the decoder node, by time instance $t$, is represented by $m$.

Linearity of this scenario lets us reformulate $\underline{Z}_{\rm{tot}}(t)$ as:
\begin{equation}\label{Eq:measEq}
\underline{Z}_{\rm{tot}}(t)=\Psi_{\rm{tot}}(t) \cdot \underline{X}+\underline{N}_{\rm{eff,tot}}(t),
\end{equation}
where $\Psi_{\rm{tot}}(t)$ and $\underline{N}_{\rm{eff,tot}}(t)$ are the \textit{total measurement matrix} and the \textit{total effective noise vector}, respectively:
\begin{equation}\label{Eq:defPsiTot}
\{\Psi_{\rm{tot}}(t)\}_{i,v}=
\left\{
\begin{array}{l l}
  \beta_{v',e'}  & ,~\scriptsize v' \rightarrow i, v \overset{e'}{\rightarrow} v', \\
  \alpha_{v'}  & ,~\scriptsize v' \rightarrow i, v'=v, \\
  0  &  ,~\mbox{otherwise} \\ \end{array} \right.
\end{equation}
\begin{equation}
\{\underline{N}_{\rm{eff,tot}}(t)\}_{i}=N_e(3)+\sum_{e' \in In(v)}\beta_{v,e'} ~N_{e'}(2),~v \rightarrow i. \label{Eq:defNeffTot}
\end{equation}
In (\ref{Eq:defPsiTot}), $v \overset{e}{\rightarrow} v'$ denotes that there is an edge $e$ from $v$ to $v'$.

Motivated by the theory of compressed sensing and sparse recovery \cite{CS,candes2007sparsity}, we claimed that one can recover messages from smaller number of received packets, $\{\underline{Z}_{\rm{tot}}(t)\}_i$'s, than the number of messages, \textit{i.e.} $m < n$ \cite{naba4}.
The decoder design and implementation for achieving near-optimal (mean squared error) performance is discussed in \cite{naba3}.
In the next section, we investigate the performance of one-step QNC, by analyzing its total network load for a given allowable distortion level, $D_0$.

\section{Distortion Analysis for One-Step Quantized Network Coding}
\label{sec:QNCanalysis}

The precise distortion analysis for QNC scenario can not be done easily because of its reliance on sparse recovery algorithms, for which a comprehensive statistical distortion analysis is still not available.
However, we are still able to obtain some performance bounds on the delay-quality performance, thanks to the work on Bayesian compressed sensing \cite{baron2010bayesian}.
In the following, we present a series of theoretical results, which characterizes the delay-distortion performance of one-step QNC scenario for non-adaptive encoding of sparse messages.

\begin{theorem}\label{th:lInf}
For the one-step QNC scenario, described in section~\ref{sec:QNC}, where $\underline{X}$ is $k$-sparse in the transform domain $\phi_{n \times n}$, with
\begin{equation}
q'_{\rm{max}}=\max_{\underline{X}} \vectornormInf{\phi^T \cdot \underline{X}},
\end{equation}
then for any $\epsilon,\gamma>0$, using a number of packets, $m$, where
\begin{equation}\label{Eq:cond1}
m > 48(1+\gamma)  \frac{(\kappa^2-1) ~ k  q'^2_{\rm{max}} + \Delta_Q^2 }{\epsilon^2} \log(n),
\end{equation}
one can decode $\underline{Z}_{\rm{tot}}(t)$ into $\hat{\underline{X}}$, such that:
\begin{equation}
\textbf{P}( |X_v-\hat{X}_v| < \epsilon ) \geq 1- n^{-\gamma},~\forall v \in \mathcal{V}.
\end{equation}
\end{theorem}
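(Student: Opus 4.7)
The plan is to pass to the sparse transform domain, apply a Bayesian compressed-sensing recovery argument in the spirit of \cite{baron2010bayesian}, and translate the sparse-domain error back to a per-node $\ell_\infty$ bound on $\underline{X}$ using orthonormality of $\phi$. Concretely, write $\underline{X}=\phi\,\underline{S}$ with $\vectornormZero{\underline{S}}=k$ and $\vectornormInf{\underline{S}}\leq q'_{\rm{max}}$, so that (\ref{Eq:measEq}) becomes $\underline{Z}_{\rm{tot}}(t)=A\,\underline{S}+\underline{N}_{\rm{eff,tot}}(t)$ with effective measurement matrix $A=\Psi_{\rm{tot}}(t)\,\phi$. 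By (\ref{Eq:defPsiTot}) the non-zero entries of each row of $\Psi_{\rm{tot}}(t)$ are independent uniform $\pm\kappa$, and orthonormality of $\phi$ carries this structure over to the columns of $A$, giving them bounded, conditionally zero-mean entries with second moment proportional to $\kappa^2$.

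Next I would bound the effective measurement noise. Combining (\ref{Eq:defNeffTot}), the uniform quantization bound $|N_e(\cdot)|\leq\Delta_Q/2$, and the normalization constraint (\ref{Eq:normalization}) on the $\beta$'s, each entry of $\underline{N}_{\rm{eff,tot}}(t)$ is a bounded zero-mean random variable with variance at most $\Delta_Q^2$. The second term $(\kappa^2-1)\,k\,q'^2_{\rm{max}}$ in (\ref{Eq:cond1}) enters when one analyses a single-coordinate matched-filter estimator: for each index $j$, setting $\hat{S}_j=\frac{1}{m\,c_j}\sum_i A_{ij}\,\{\underline{Z}_{\rm{tot}}(t)\}_i$ with $c_j=\textbf{E}[A_{ij}^2]$, the residual $\hat{S}_j-S_j$ splits into a leakage term over the $k-1$ other nonzero coordinates of $\underline{S}$ (per-row variance at most $(\kappa^2-1)\,k\,q'^2_{\rm{max}}$, after absorbing the diagonal $\kappa^2$ into $c_j$ and using $k$-sparsity with $\vectornormInf{\underline{S}}\leq q'_{\rm{max}}$) and a noise term of per-row variance at most $\Delta_Q^2$. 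A Hoeffding (or Bernstein) bound across the $m$ independent rows then yields
\begin{equation*}
\textbf{P}\big(|\hat{S}_j-S_j|\geq\epsilon\big)\;\leq\;2\exp\!\Big(-\tfrac{m\,\epsilon^2}{48\,[(\kappa^2-1)kq'^2_{\rm{max}}+\Delta_Q^2]}\Big),
\end{equation*}
and substituting the hypothesis (\ref{Eq:cond1}) drives the right-hand side below $n^{-(1+\gamma)}$.

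To finish, set $\hat{\underline{X}}=\phi\,\hat{\underline{S}}$; orthonormality of $\phi$ together with a union bound over the $n$ coefficients converts the per-coefficient bound into $\textbf{P}(|X_v-\hat{X}_v|<\epsilon)\geq 1-n^{-\gamma}$ for every $v\in\mathcal{V}$. The main obstacle I anticipate is making the independence argument fully rigorous: $\Psi_{\rm{tot}}(t)$ (through the $\beta_{v,e'}$ and $\alpha_v$), the effective noise $\underline{N}_{\rm{eff,tot}}(t)$ (which re-uses the same $\beta$'s and the quantization errors $N_e$), and the random forwarding indicator $v\to i$ are not jointly independent, so the Hoeffding step must be executed conditionally on the topology, the forwarding indicators, and the support of $\underline{S}$, with careful bookkeeping to show that the remaining sign randomness in the $\pm\kappa$ entries alone is what drives the concentration. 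Pinning down the precise constant $48$ will likely require either a Bernstein-type refinement of the matched-filter bound sketched above or direct appeal to the MAP estimator analysed in \cite{baron2010bayesian}.
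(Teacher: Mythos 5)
Your overall flavor --- a per-coordinate matched-filter (correlation) estimator whose error splits into a signal-leakage term of variance $(\kappa^2-1)kq'^2_{\rm{max}}$ and a quantization-noise term of variance $\Delta_Q^2$, followed by a union bound over coordinates --- matches the paper's proof. But two of your steps are genuinely broken. First, the concentration step: you invoke Hoeffding/Bernstein across the $m$ rows, but the summands $w_{i}=\big(\sum_v \{\Psi_{\rm{tot}}\}_{i,v}x_v+\{\underline{N}_{\rm{eff,tot}}\}_i\big)\cdot A_{ij}$ are only variance-controlled, not bounded by anything of the right size --- their almost-sure range is of order $\kappa^2 n\, q_{\rm{max}} q'_{\rm{max}}$, so Hoeffding gives a tail with $n^2$ in the denominator and Bernstein's linear term swamps the variance term for the relevant deviation $\epsilon$. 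The only distributional information available on $\Psi_{\rm{tot}}(t)$ is its first four moments (Eq.~\ref{Eq:MeasMatConds}), and the paper extracts an exponential tail from this via a median-of-means device: it partitions the $m$ rows into $m_2$ blocks of size $m_1$, applies Chebyshev within each block (needing only the variance, hence only $m_1 > 4[(\kappa^2-1)kq'^2_{\rm{max}}+\Delta_Q^2]/\epsilon^2$ to get failure probability $1/4$), and then takes the median of the $m_2$ independent block estimates, amplifying to $e^{-m_2/12}$ by Chernoff. That is where the factor $48 = 4\times 12$ comes from; your route cannot reproduce it.

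Second, your final domain-transfer step is false as stated: a per-coefficient guarantee $|\hat{S}_j-S_j|<\epsilon$ for all $j$ does \emph{not} convert, via orthonormality of $\phi$, into $|X_v-\hat{X}_v|<\epsilon$; setting $\hat{\underline{X}}=\phi\hat{\underline{S}}$ gives only $|X_v-\hat{X}_v|\leq \vectornorm{\phi_v}\cdot\vectornorm{\underline{S}-\hat{\underline{S}}}\leq \sqrt{n}\,\epsilon$ in the worst case (orthonormal maps preserve $\ell_2$, not $\ell_\infty$). The paper sidesteps this entirely by never estimating $\underline{S}$: it correlates the measurements directly with the canonical vectors $\underline{u}_j$ of $\mathbb{R}^n$ to estimate each $x_j$ itself, and sparsity enters only through the variance bound $\vectornorm{\underline{x}}^2=\vectornorm{\underline{s}}^2\leq k q'^2_{\rm{max}}$. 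If you want to salvage your transform-domain route you would need either to shrink $\epsilon$ by $\sqrt{n}$ (destroying the claimed bound on $m$) or to add a support-identification and thresholding step, which the theorem's hypotheses do not support. Your closing caveat about joint independence of the $\beta$'s, the forwarding indicators, and the effective noise is a fair concern, but it is the least of the problems here --- the paper simply imports the independence and moment structure of $\{\Psi_{\rm{tot}}(t)\}_{i,v}$ from \cite{naba4} and treats the quantization noise as bounded and deterministic inside the variance computation.
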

\begin{proof}
See Appendix.
\end{proof}

\begin{theorem}\label{th:finalQNC1}
For the described one-step QNC scenario, if the total network load, $m \cdot L$, satisfies the condition of Eq.~\ref{Eq:finalQNCm}, then we have:
\begin{equation}\label{Eq:disConstr1}
\textbf{E}[|X_v-\hat{X}_v|] \leq D_0 ,~\forall v \in \mathcal{V}.
\end{equation}
\end{theorem}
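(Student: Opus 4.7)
The plan is to pass from the high-probability tail bound in Theorem~\ref{th:lInf} to a bound on the expectation $\textbf{E}[|X_v-\hat{X}_v|]$ by splitting on whether the event $\{|X_v-\hat{X}_v|<\epsilon\}$ holds. Concretely, I would write
\[
\textbf{E}[|X_v-\hat{X}_v|] \leq \epsilon \cdot \textbf{P}(|X_v-\hat{X}_v|<\epsilon) + M \cdot \textbf{P}(|X_v-\hat{X}_v| \geq \epsilon),
\]
where $M$ is any deterministic upper bound on $|X_v-\hat{X}_v|$. Since $|X_v| \leq q_{\rm{max}}$ and one may assume (by clipping the reconstruction to the feasible interval $[-q_{\rm{max}},+q_{\rm{max}}]$) that $|\hat{X}_v| \leq q_{\rm{max}}$ as well, we may take $M = 2q_{\rm{max}}$. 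Combined with Theorem~\ref{th:lInf} this yields $\textbf{E}[|X_v-\hat{X}_v|] \leq \epsilon + 2q_{\rm{max}} \cdot n^{-\gamma}$.

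Next, I would set $\epsilon = D_0/2$ and then choose $\gamma$ large enough that $2q_{\rm{max}}\cdot n^{-\gamma} \leq D_0/2$, i.e.\ $\gamma \geq \log(4q_{\rm{max}}/D_0)/\log n$. Substituting these choices into the condition on $m$ from Theorem~\ref{th:lInf} produces an explicit lower bound of the form
\[
m > \frac{192}{D_0^2}\left(\log n + \log(4q_{\rm{max}}/D_0)\right) \cdot \left((\kappa^2-1)\,k\,q'^2_{\rm{max}} + \Delta_Q^2\right).
\]
To convert this into a bound on the total network load $m\cdot L$, I would pick $L$ so that $\Delta_Q^2$ does not dominate the compressed-sensing term $(\kappa^2-1)\,k\,q'^2_{\rm{max}}$. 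A natural choice, consistent with Corollary~\ref{th:qPFrd}, is $L \simeq \frac{1}{C_0}\log_2(q_{\rm{max}}/D_0)$, which makes $\Delta_Q = O(D_0)$. Multiplying the resulting lower bound on $m$ by this $L$ produces the condition stated in Eq.~\ref{Eq:finalQNCm}, and the distortion guarantee (\ref{Eq:disConstr1}) then follows from the expectation bound established in the first step.

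The main obstacles I anticipate are twofold. First, justifying the clipping of $\hat{X}_v$ so that the deterministic bound $|X_v-\hat{X}_v| \leq 2q_{\rm{max}}$ on the bad event is valid; this is natural but requires that the decoder either produces reconstructions in $[-q_{\rm{max}},q_{\rm{max}}]$ directly or can be post-processed to do so without increasing the error on the good event. Second, the trade-off between $L$ (which controls $\Delta_Q$) and $m$ is delicate, since $L$ enters the total network load multiplicatively but also shrinks $\Delta_Q$ inside the $m$-bound; extracting a clean closed-form expression for Eq.~\ref{Eq:finalQNCm} requires balancing these two contributions carefully so that neither the sparsity-dependent term nor the quantization-noise term dominates unnecessarily.
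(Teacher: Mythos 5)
Your first step is exactly the paper's argument: split the expectation on the event $\{|X_v-\hat{X}_v|<\epsilon\}$, use Theorem~\ref{th:lInf} on the good event, and clip the decoder output to $[-q_{\rm{max}},q_{\rm{max}}]$ so that $|X_v-\hat{X}_v|\leq 2q_{\rm{max}}$ on the bad event, giving $\textbf{E}[|X_v-\hat{X}_v|]\leq \epsilon(1-n^{-\gamma})+2q_{\rm{max}}n^{-\gamma}$ (you drop the harmless factor $(1-n^{-\gamma})$). The clipping concern you raise is exactly how the paper resolves it, so that part is fine.

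The gap is in the second half. Eq.~\ref{Eq:finalQNCm} is not a closed-form bound obtained from one particular choice of parameters; it is a \emph{minimization} over all triples $(\epsilon,\gamma,L')$ subject to the constraint $\epsilon(1-n^{-\gamma})+2q_{\rm{max}}n^{-\gamma}\leq D_0$. The paper's proof therefore keeps $(\epsilon,\gamma,L')$ free: for \emph{every} feasible triple, the packet count from (\ref{Eq:cond1}) with $\Delta_Q^2=q_{\rm{max}}^2 2^{2-2L'C_0}$ is sufficient, and the theorem's threshold is the minimum of $m_{\min}(\epsilon,\gamma,L')\cdot L'$ over that feasible set. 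By instead fixing $\epsilon=D_0/2$, a specific $\gamma$, and $L\simeq\frac{1}{C_0}\log_2(q_{\rm{max}}/D_0)$, you produce one particular feasible point, whose associated load requirement is in general \emph{larger} than the minimum in Eq.~\ref{Eq:finalQNCm}. So your argument establishes sufficiency of a more restrictive condition, and a network load satisfying $m\cdot L>\min_{\epsilon,\gamma,L'}(\cdots)$ need not satisfy your condition; the implication you need runs the other way. Your substitution does not ``produce the condition stated in Eq.~\ref{Eq:finalQNCm}''; what it produces is closer in spirit to Corollary~\ref{th:finalQNC2} (which the paper obtains with $\gamma=1$, $\epsilon=D_0-2q_{\rm{max}}/n$, $L=1$). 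To prove the theorem as stated, keep the parameters free, note that the distortion guarantee holds for any feasible $(\epsilon,\gamma,L')$, and then take $L$ equal to the optimizing $L'$ so that $m\cdot L$ exceeding the minimized expression implies the packet-count condition of Theorem~\ref{th:lInf} for that optimal triple.
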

\begin{figure*}
\begin{eqnarray}
m \cdot L &>&  \min_{\epsilon,\gamma,L'}  48(1+\gamma)  \frac{(\kappa^2-1) ~ k  q'^2_{\rm{max}} + q_{\rm{max}}^2 2^{2-2 L' C_0} }{  \epsilon^2} L' \cdot  \log(n) \label{Eq:finalQNCm} \\
&& \mbox{subject to:}~~ \epsilon (1-n^{-\gamma}) + 2 q_{\rm{max}} n^{-\gamma} \leq D_0. \nonumber \\
&&\nonumber \\
m \cdot L & > & 96 \frac{(\kappa^2-1)k q'^2_{\rm{max}}+q^2_{\rm{max}} 2^{2-2 C_0}}{(D_0-2q_{\rm{max}}/n)^2}  \cdot \log(n) \label{Eq:pffff}
\end{eqnarray}
\end{figure*}
\begin{proof}
From theorem~\ref{th:lInf}, we know that if for a choice of $\epsilon,\gamma>0$ the condition of (\ref{Eq:cond1}) is satisfied, then the expected value of $|X_v-\hat{X}_v|$ is smaller than $\epsilon$, with probability $1-n^{-\gamma}$. 
We also know that $X_v$'s are bounded between $-q_{\rm{max}}$ and $+q_{\rm{max}}$. 
Therefore, if we limit (clip) the outcome of the decoder, $\hat{X}_v$, to be between $-q_{\rm{max}}$ and $+q_{\rm{max}}$, we can ensure that $|X_v-\hat{X}_v| \leq 2q_{\rm{max}}$, for the portion of the times ($n^{-\gamma}$) that theorem~\ref{th:lInf} fails to provide any guarantee.
Hence:
\begin{equation}
\textbf{E}[|X_v-\hat{X}_v|] \leq \epsilon (1-n^{-\gamma}) + 2q_{\rm{max}} n^{-\gamma}.
\end{equation}
And, by constraining:
\begin{equation}
\epsilon (1-n^{-\gamma}) + 2q_{\rm{max}} n^{-\gamma} \leq D_0,
\end{equation}
we can also ensure that $\textbf{E}[|X_v-\hat{X}_v|] \leq D_0$ holds. 
Finally, we perform a minimization to find the lowest possible total network load, subject to the constraint (\ref{Eq:disConstr1}).
\end{proof}

As it can be understood from (\ref{Eq:finalQNCm}), increasing the packet length, $L$, helps to decrease the required number of packets, $m$. 
On the other hand, it may increase the total network load. Theorem~\ref{th:finalQNC1} formalizes this trade-off and characterizes the appropriate choice of packet length.

By choosing $\gamma=1$, we obtain:
\begin{equation}
\epsilon \leq \frac{n D_0-2q_{\rm{max}}}{n-1}.
\end{equation}
And since
\begin{equation}
D_0-\frac{2q_{\rm{max}}}{n}=\frac{n D_0-2q_{\rm{max}}}{n} < \frac{n D_0-2q_{\rm{max}}}{n-1},
\end{equation}
by choosing $\epsilon=D_0-{2q_{\rm{max}}}/{n}$ and $L=1$, theorem~\ref{th:finalQNC1} simplifies to the following corollary.
\begin{corollary}\label{th:finalQNC2}
For the one-step QNC scenario, if the total network load, $m \cdot L$, satisfies the condition of Eq.~\ref{Eq:pffff}, then the desired distortion level of $D_0$ can be ensured:
\begin{equation}
\textbf{E}[|X_v-\hat{X}_v|] \leq D_0 ,~\forall v \in \mathcal{V}.~\blacksquare
\end{equation}
\end{corollary}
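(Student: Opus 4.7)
The plan is to derive Corollary~\ref{th:finalQNC2} as a direct specialization of Theorem~\ref{th:finalQNC1}. Since the right-hand side of (\ref{Eq:finalQNCm}) is a minimization over the free parameters $(\epsilon, \gamma, L')$, any specific admissible triple yields an upper bound on that minimum, and therefore a sufficient condition on $m \cdot L$ for the distortion guarantee. My goal is to choose the particular values $\gamma = 1$, $L' = 1$, and a convenient $\epsilon$ that keeps the feasibility constraint easy to check and produces the clean form (\ref{Eq:pffff}).

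First I would fix $\gamma = 1$, which reduces the constraint $\epsilon(1 - n^{-\gamma}) + 2q_{\rm{max}} n^{-\gamma} \leq D_0$ to $\epsilon(1 - 1/n) + 2q_{\rm{max}}/n \leq D_0$, i.e.\ $\epsilon \leq (n D_0 - 2q_{\rm{max}})/(n-1)$. Next I would set $\epsilon = D_0 - 2q_{\rm{max}}/n = (n D_0 - 2q_{\rm{max}})/n$, which is strictly smaller than $(n D_0 - 2q_{\rm{max}})/(n-1)$ and therefore satisfies the constraint. I accept the slight suboptimality of not saturating the bound in exchange for a closed-form dependence on $D_0$ alone.

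Finally I would set $L' = 1$. The prefactor $48(1+\gamma)L'$ then collapses to $96$, and the quantization-noise term $q_{\rm{max}}^2 \, 2^{2 - 2 L' C_0}$ becomes $q_{\rm{max}}^2 \, 2^{2 - 2 C_0}$. Substituting the chosen $\gamma$, $\epsilon$, and $L'$ into the right-hand side of (\ref{Eq:finalQNCm}) reproduces the expression on the right-hand side of (\ref{Eq:pffff}) verbatim, and the distortion conclusion is inherited directly from Theorem~\ref{th:finalQNC1}.

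There is essentially no conceptual obstacle here: the argument is purely algebraic specialization, with all the probabilistic content already absorbed into Theorems~\ref{th:lInf} and~\ref{th:finalQNC1}. The only mild subtlety worth flagging is the implicit nondegeneracy condition $n D_0 > 2 q_{\rm{max}}$, i.e.\ the target distortion must exceed the trivial clipping floor $2 q_{\rm{max}}/n$; otherwise the chosen $\epsilon$ is non-positive and the bound becomes vacuous. In the regime of interest this is automatic, so Corollary~\ref{th:finalQNC2} follows mechanically from Theorem~\ref{th:finalQNC1}.
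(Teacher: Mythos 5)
Your proposal matches the paper's own derivation exactly: the paper also specializes Theorem~\ref{th:finalQNC1} by taking $\gamma=1$, noting $\epsilon = D_0 - 2q_{\rm{max}}/n = (nD_0-2q_{\rm{max}})/n < (nD_0-2q_{\rm{max}})/(n-1)$ so the feasibility constraint holds, and setting $L=1$ to obtain (\ref{Eq:pffff}). Your additional remark about the nondegeneracy condition $nD_0 > 2q_{\rm{max}}$ is a sensible caveat the paper leaves implicit, but the argument is otherwise identical.
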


Although the preceding corollary loosen our former formulation on the conditions to ensure the distortion constraint, it provides us with a better understanding.
Explicitly, when the number of network nodes, $n$, is large, the total network load in one-step QNC scenario, $m \cdot L$, (Eq.~\ref{Eq:pffff}) is in the order of $\log(n)$.
As discussed in section~\ref{sec:Qforward}, the total network load in QPF scenario, $(n-1) \cdot L$, has order of $n$, for the same distortion level. 
This clarifies a significant reduction in the total network load in one-step QNC scenario, for large networks.

\section{Conclusion}
\label{sec:conclusion}

The possibility of non-adaptive distributed compression of correlated sources has been investigated in this paper.
This was done by discussing the performance of our previously proposed one-step QNC scenario, in which random linear network coding meets the theory of compressed sensing and sparse recovery.
Specifically, we derived a sufficient (not necessary) condition on the network load to satisfy a desired distortion level.
Our mathematical derivations show a significant decrease in the total load of the network, compared to the case of conventional QPF.
The resulting decrease in the total network load shows a potential decrease in the overall transmission power in the network. 
One may also weakly interpret this decrease as a decrease on the delivery delay, required to achieve the desired distortion level.



\section*{Appendix: Proof of theorem~\ref{th:lInf}}
It is shown in the proof of theorem~3.1 in \cite{naba4} that $\Psi_{\rm{tot}}(t)$ is such that for $\forall i,v,~1 \leq i \leq m, 1 \leq v \leq n$, $\{\Psi_{\rm{tot}}(t)\}_{i,v}$'s are independent, and we have (Eq.~20 in \cite{naba4}):
\begin{eqnarray}
\textbf{E}[\{\Psi_{\rm{tot}}(t)\}_{iv}] &=& 0, \nonumber \\
\textbf{E}[\{\Psi_{\rm{tot}}(t)\}^2_{iv}] &=& 1, \nonumber \\
\textbf{E}[\{\Psi_{\rm{tot}}(t)\}^4_{iv}] &=& \kappa^2. \label{Eq:MeasMatConds}
\end{eqnarray}

Define positive integers $m_1$ and $m_2$, which we will determine, and set $m=m_1 m_2$.
Partition the $m \times n$ matrix $\Psi_{\rm{tot}}(t)$ into $m_2$ matrices $\{\Psi_1,\ldots,\Psi_{m_2}\}$, each of size $m_1 \times n$.
Now, consider a realization of $\underline{X}$, called $\underline{x}$, for which we define $\{\underline{z}_1=\frac{1}{\sqrt{m_1}}\Psi_1 \underline{x}+\underline{n}_1,\ldots,\underline{z}_{m_2}=\frac{1}{\sqrt{m_1}}\Psi_{m_2} \underline{x}+\underline{n}_{m_2}\}$. 
Moreover, $\{\underline{n}_l\}_i$'s are the corresponding elements of the total effective noise vector, for which (\ref{Eq:defNeffTot}) implies: $|\{n_{l}\}_i| \leq  \Delta_Q$.

Furthermore, we define $\underline{u}_j$'s, where $1 \leq j \leq n$, to be the canonical vectors in $\mathbb{R}^n$.
For them, we define $\{\underline{z}'_{1,j}=\frac{1}{\sqrt{m_1}}\Psi_1 \underline{u}_j,\ldots,\underline{z}'_{m_2,j}=\frac{1}{\sqrt{m_1}}\Psi_{m_2} \underline{u}_j\}$.

Now, we define random variables $r_{1,j},\ldots,r_{{m_2},j}$:
\begin{eqnarray}
r_{l,j}=\underline{z}_l^T \cdot \underline{z}'_{l,j}&=&\sum_{i=1}^{m_1} \{\underline{z}_l\}_{i}~ \{\underline{z}'_{l,j}\}_{i}=\sum_{i=1}^{m_1} \frac{1}{m_1} w_{l,j,i}, \nonumber
\end{eqnarray}
where:
\begin{equation}
w_{l,j,i}= \Big ( \sum_{v=1}^{n} \{\Psi_{l}\}_{i,v} x_v + \{\underline{n}_l\}_{i} \Big ) \cdot \Big ( \sum_{v=1}^{n} \{\Psi_{l}\}_{i,v} \{\underline{u}_j\}_{v} \Big ).\nonumber
\end{equation}
Moreover, $r_{1,j},\ldots,r_{{m_2},j}$ are independent because of the independence of $\{\Psi_{l}\}_{i,v}$'s.

By using a similar reasoning as in \cite{wang2007distributed}, we have:
\begin{eqnarray}
\textbf{E}[w_{l,j,i}] &=& 
\textbf{E}[\sum_{v=1}^{n} \{\Psi_{l}\}^2_{i,v} ~x_v ~\{\underline{u}_j\}_{v} ] \nonumber \\
&& +2\textbf{E}[\sum_{v=1}^{n} \sum_{v'=v+1}^{n} \{\Psi_{l}\}_{i,v} \{\Psi_{l}\}_{i,v'}  x_v \{\underline{u}_j\}_{v'} ] \nonumber \\
&& +\{\underline{n}_l\}_{i}~\textbf{E}[ (\sum_{v=1}^{n} \{\Psi_{l}\}_{i,v} \{\underline{u}_j\}_{v})] \nonumber \\
&=& \sum_{j=1}^{n}  x_v ~\{\underline{u}_j\}_{v} = \underline{x}^T \cdot \underline{u}_j, \nonumber 
\end{eqnarray}
\begin{eqnarray}
\textbf{E}[w^2_{l,j,i}] &=& 
\textbf{E}[(\sum_{v=1}^{n} \{\Psi_{l}\}_{i,v} x_v)^2 \cdot (\sum_{v=1}^{n} \{\Psi_{l}\}_{i,v} \{\underline{u}_j\}_{v})^2] \nonumber \\
&& +\{\underline{n}_l\}_{i}^2 ~\textbf{E}[ (\sum_{v=1}^{n} \{\Psi_{l}\}_{i,v} \{\underline{u}_j\}_{v})^2] \nonumber \\
&& +2 \{\underline{n}_l\}_{i} ~\textbf{E}[ (\sum_{v=1}^{n} \{\Psi_{l}\}_{i,v} x_v) (\sum_{v=1}^{n} \{\Psi_{l}\}_{i,v} \{\underline{u}_j\}_{v})^2] \nonumber \\
&=& 2(\underline{x}^T \cdot \underline{u}_j)^2+\vectornorm{\underline{x}}^2 ~ \vectornorm{\underline{u}_j}^2 \nonumber \\
&& +(\kappa^2-3) (\sum_{v=1} x_v^2 \{\underline{u}_l\}^2_{v} ) + \{\underline{n}_l\}_{i}^2 \vectornorm{\underline{u}_j}^2,  \nonumber
\end{eqnarray}
where in the last equality we used Lemma~1 in \cite{wang2007distributed}, and also the fact that $\textbf{E}[\{\Psi_{l}\}^3_{i,v}]=0$. 
Recalling that $\underline{u}_j$'s are canonical vectors with unit $\ell_2$-norm, we have:
\begin{eqnarray}
\textbf{Var}[w_{l,j,i}] &=&\vectornorm{\underline{x}}^2 +(\kappa^2-2) x^2_j + \{\underline{n}_l\}_{i}^2 . \nonumber
\end{eqnarray}

Now, for $r_{l,j}$, we have:
\begin{eqnarray}
\textbf{E}[r_{l,j}]& =& \sum_{i=1}^{m_1} \frac{1}{m_1} \textbf{E}[w_{l,i}]= \underline{x}^T \cdot \underline{u}_j, \nonumber 
\end{eqnarray}
\begin{eqnarray}
\textbf{Var}[r_{l,j}] &=& \frac{1}{m_1^2} \sum_{i=1}^{m_1} \textbf{Var}[w_{l,j,i}] \nonumber \\
&=& \frac{1}{m_1} \Big ( \vectornorm{\underline{x}}^2 +(\kappa^2-2) x^2_j + \{\underline{n}_l\}_{i}^2 \Big )~. \nonumber
\end{eqnarray}
Thus, the Chebyshev inequality,
\begin{eqnarray}
\textbf{P}(|r_{l,j}-\underline{x}^T \underline{u}| \geq \epsilon ) 
&\leq & \frac{\textbf{Var}(r_{l,j})}{\epsilon^2 }, \nonumber
\end{eqnarray}
and the fact that $|\{\underline{n}_l\}_{i}| \leq \Delta_Q$, imply:
\begin{eqnarray}
\textbf{P}(|r_{l,j}-x_j| \geq \epsilon ) 
& \leq & \frac{1}{\epsilon^2 m_1} \Big ( \vectornorm{\underline{x}}^2 +(\kappa^2-2) x^2_j + \Delta_Q^2   \Big ) \nonumber  \\
& \leq & \frac{1}{\epsilon^2 m_1} \Big ( (\kappa^2-1) \max_{\underline{x}} \vectornorm{\underline{x}}^2 
+ \Delta_Q^2   \Big ).   \nonumber
\end{eqnarray}
By picking
\begin{equation}\label{Eq:m1}
m_1 > 4 \frac{(\kappa^2-1) \max_{\underline{x}} \vectornorm{\underline{x}}^2 + \Delta_Q^2 }{\epsilon^2},
\end{equation}
we have:
\begin{equation}
\textbf{P}(|r_{l,j}-x_j|>\epsilon) < 
\frac{1}{4}.
\end{equation}

We define $\hat{x}_j$, the decoded value for $x_j$, to be the median of the independent random variables $r_{1,j},\ldots,r_{m_2,j}$.
Further, we define the indicator random variable $\xi_{l,j}$ to be equal to one if $|r_{l,j}-x_j|>\epsilon$ and zero otherwise.
Now, if $\sum_{l=1}^{m_2} \xi_{l,j}$ is more than $\frac{m_2}{2}$, the median (decoded value) will not be inside the interval: $|\hat{x}_j-x_j|>\epsilon$. Moreover, we have: $\textbf{E}[\sum_{l=1}^{m_2} \xi_{l,j}] < \frac{m_2}{4}$.
By using the Chernoff inequality
\begin{equation}
\textbf{P}(\sum_{l=1}^{m_2} \xi_{l,j} > (1+\delta)\textbf{E}[\sum_{l=1}^{m_2} \xi_{l,j}]) \leq e^{-\delta^2 \textbf{E}[\sum_{l=1}^{m_2} \xi_{l,j}]/3 }
\end{equation}
where $0<\delta<1$ \cite{papoulis2001probability}, we obtain:
\begin{eqnarray}\label{Eq:ll1}
\textbf{P}(\sum_{l=1}^{m_2} \xi_{l,j}>(1+\delta)\frac{m_2}{4}  ) <  e^{-\delta^2 m_2/12}. 
\end{eqnarray}
Taking the limit of (\ref{Eq:ll1}) when $\delta \rightarrow 1^{-}$,
\begin{eqnarray}
\textbf{P}(\sum_{l=1}^{m_2} \xi_{l,j}>\frac{m_2}{2}  ) <  e^{-m_2/12}. 
\end{eqnarray}
Finally, by using the union bound, we have:
\begin{equation}
\textbf{P}\Big( \exists j: |x_j-\hat{x}_j|>\epsilon \Big ) \leq n \cdot e^{-m_2/12}.
\end{equation}
Therefore, by choosing 
\begin{equation}
m_2 > 12(1+\gamma) \log(n),
\end{equation}
and combining with (\ref{Eq:m1}) to find $m=m_1 \cdot m_2$, we have proved the theorem.

\bibliographystyle{IEEEtran}
\bibliography{Ref_arXiv}

\end{document}